\documentclass[review,hidelinks,onefignum,onetabnum]{siamart251216} 
 
\usepackage{lipsum}
\usepackage{amsfonts}
\usepackage{graphicx}
\usepackage{epstopdf}
\usepackage{algorithmic}
\usepackage{bm}

\ifpdf
  \DeclareGraphicsExtensions{.eps,.pdf,.png,.jpg}
\else
  \DeclareGraphicsExtensions{.eps}
\fi

\newsiamremark{remark}{Remark}
\newsiamremark{hypothesis}{Hypothesis}
\crefname{hypothesis}{Hypothesis}{Hypotheses}
\newsiamthm{claim}{Claim}
\newsiamremark{fact}{Fact}
\crefname{fact}{Fact}{Facts}

\headers{SPH Model for Droplet Dynamics}{Z. H. Qiao,  Y. F. Wei and  X. M. Xu}

\title{An SPH Model with Physically Prescribed Parameters for Droplet Dynamics on Complex Surfaces \thanks{Submitted to the editors DATE.
\funding{This work is supported by the CAS AMSS-PolyU Joint Laboratory of Applied Mathematics (No. JLFS/P-501/24).
The first author was partially supported by the Hong Kong Research Grants Council GRF grant 15305624 and NSFC/RGC Joint Research Scheme (No. N\_PolyU5145/24).
The second author was partially supported by the Hong Kong Polytechnic University Postdoctoral Research Fund 1-W30N.
}}} 

\author{ZHONGHUA QIAO \thanks{Department of Applied Mathematics, The Hong Kong Polytechnic University,  Hung Hom, Kowloon, Hong Kong
  (\email{zhonghua.qiao@polyu.edu.hk}).}
\and YIFAN WEI \thanks{Corresponding author. Department of Applied Mathematics, The Hong Kong Polytechnic University,  Hung Hom, Kowloon, Hong Kong
  (\email{yi-fan.wei@polyu.edu.hk}).}
\and XIANMIN XU \thanks{NCMIS \& LSEC,Institute of Computational Mathematics and Scientific/Engineering Computing, Academy of Mathematics and Systems Science, Beijing 100190, PR China
  (\email{xmxu@lsec.cc.ac.cn}).}}

\usepackage{amsopn}

\usepackage{xcolor}
\usepackage{booktabs, comment}
\usepackage{pgfpages}
\usepackage{csquotes}
\usepackage{amsmath}
\usepackage{pgfplots}
\usepackage{tikz}
\usepackage{tikz-3dplot}
\usepackage{graphicx} 
\usepackage{float}
\usepackage{subfigure}
 
\usepackage{multirow}
\usepackage{epstopdf}
\usepackage{hyperref}
\usetikzlibrary{angles,quotes}

\begin{document}
\nolinenumbers

\maketitle

\begin{abstract}
Numerical simulation of droplet dynamics on complex surfaces with varying wettability is of great significance to both engineering applications and fundamental research. However, existing numerical methods still face challenges in accurately capturing interfacial interactions while preserving physical consistency and computational efficiency. In this work, a physically grounded and efficient smoothed particle hydrodynamics (SPH) model is developed for droplet dynamics simulation. To reduce computational cost, a single-phase droplet modeling strategy is employed. At the interface, long-range interactions are approximated using the SPH kernel function, whereas short-range interactions are represented through pressure. Based on this treatment, an explicit relationship between the intermolecular potential energy and the macroscopic surface tension coefficient is further established, thereby reducing reliance on empirical parameter calibration. The proposed method is first validated through static wetting simulations, where the equilibrium contact angles agree well with the Young--Dupr\'e equation. Further simulations of wetting and droplet impact demonstrate that the method is capable of capturing complex dynamic wetting behaviors.
\end{abstract} 

\begin{keywords}
Smoothed particle hydrodynamics; droplet dynamics; surface tension; wettability; complex surfaces.
\end{keywords}

\begin{MSCcodes}
76M28, 76D45, 65M99, 76T10.
\end{MSCcodes}

\section{Introduction}
Droplet dynamics has attracted sustained attention due to its broad relevance to both engineering applications and fundamental science. Representative examples include inkjet printing \cite{wijshoff2018drop,lohse2022fundamental}, aircraft anti-icing and de-icing \cite{long2023micro,yang2023icephobic,tian2013wetting}, agricultural spraying \cite{privitera2023drop,wang2023sustained}, and microfluidic systems \cite{moragues2023droplet,kovalchuk2023review}. 
In these processes, droplet behavior is governed by the interplay of inertia, viscosity, surface tension, and solid--liquid interactions, leading to complex dynamic phenomena such as spreading, receding, rebounding, and splashing \cite{mohammad2023physics}. 
Nevertheless, understanding and accurately predicting interfacial interaction forces, particularly those associated with surface tension and wettability, remain major challenges in the study of droplet dynamics.

Numerical methods have significantly advanced the simulation and understanding of droplet dynamics. Among them, traditional mesh-based approaches have played a central role in multiphase flow modeling and have yielded many important results. For instance, a geometric Volume-of-Fluid-based framework was proposed for multicomponent phase change and validated for non-isothermal two-phase flows through a series of benchmark tests \cite{cipriano2024multicomponent}. Similarly, a thermodynamically consistent lattice Boltzmann method was developed for non-isothermal two-phase flows with liquid--vapour phase change and wetting effects \cite{qiao2026unified}. 
High-order mesh-based formulations have also been developed for compressible multiphase flows; for instance, Qin et al.~\cite{qin2025high} proposed a TENO scheme within the lattice Boltzmann flux solver framework, combined with a level-set-based interface-tracking method to resolve phase interfaces.
However, droplet dynamics often involves large interfacial deformation, moving contact lines, topological changes, and repeated contact or collision events \cite{jiao2023effect}, which pose significant challenges to mesh-based methods in terms of interface accuracy and the treatment of complex boundary evolution. In many cases, additional interface-capturing or interface-tracking techniques are required, further increasing the algorithmic complexity. In contrast, smoothed particle hydrodynamics (SPH), as a fully Lagrangian meshfree method, has shown great potential for droplet dynamics simulation \cite{monaghan1992smoothed}. Owing to its particle-based discretization, SPH can naturally handle large deformation, free-surface motion, and complex interfacial evolution without suffering from mesh entanglement, making it a particularly attractive approach for droplet impact and other interfacial flow problems.

Considerable efforts have been devoted to modeling interfacial interaction forces in SPH, as these forces are essential for accurately representing surface tension and  wettability. Existing SPH treatments can generally be classified into two main categories. The first is the continuum surface force (CSF) approach, originally proposed by Brackbill et al. \cite{brackbill1992continuum}, in which surface tension is transformed into an equivalent volumetric force, thereby enabling interfacial effects to be incorporated within a continuum framework. In fact, this approach has been widely adopted in SPH simulations of multiphase and free-surface flows. For example, Vergnaud et al. \cite{vergnaud2022c} introduced several important improvements for single-phase SPH simulations that are applicable to different SPH schemes. However, the CSF approach relies strongly on the accurate evaluation of local interface normals and curvature, which can become challenging in the presence of large deformations, complex interface evolution, or insufficient particle resolution. As a result, its robustness may be compromised in certain droplet dynamics problems.

By contrast, the particle--particle interaction force (PIF) approach, which constitutes the second category, describes interfacial effects through pairwise forces between particles and thus provides a more direct representation of microscopic attractive and repulsive interactions. This characteristic makes it particularly attractive for modeling physical processes near interfaces and solid boundaries. For instance, Kordilla et al. \cite{kordilla2013smoothed} employed this approach to simulate droplet and film flows over a wide range of contact angles and Reynolds numbers on rock surfaces, with only the liquid and solid phases discretized by SPH particles. Nevertheless, a major limitation of early PIF-based SPH models was that the interaction-force parameters had to be calibrated empirically against the surface tension and static contact angle \cite{good1992contact}.  Tartakovsky and Panchenko \cite{tartakovsky2016pairwise} proposed a revised pairwise-force SPH formulation and derived explicit relationships between the interaction-force parameters and the surface tension and static contact angle for two- and three-phase flows in bounded domains. Despite this improvement, the method may still induce an additional virial pressure away from the interface, and its performance remains dependent on the specific functional form adopted for the pairwise interaction force. Therefore, the development of a more robust SPH model with fewer artificial parameters remains highly desirable.

To overcome the limitations of existing SPH approaches in modeling interfacial interactions, this work develops a physically grounded SPH framework for droplet dynamics from a microscopic perspective. The proposed method adopts a single-phase strategy to enhance computational efficiency, represents long-range interfacial interactions using the SPH kernel function, and incorporates short-range interactions into the pressure term. Furthermore, an explicit relationship between the intermolecular potential energy and the macroscopic surface tension coefficient is derived, which reduces the need for empirical parameter calibration. Overall, the proposed framework offers a physically consistent, robust, and efficient approach for simulating droplet dynamics on complex surfaces.

The main contributions are summarized as follows:
\begin{itemize}
    \item A single-phase SPH strategy is developed to improve the computational efficiency of droplet dynamics simulations.
    \item A physically grounded interfacial interaction model is proposed to describe both long-range and short-range effects with reduced empirical parameter dependence.
    \item Comprehensive wetting and impact simulations are performed to validate the proposed method and demonstrate its ability to capture complex dynamic wetting behaviors.
\end{itemize}

The remainder of this paper is organized as follows. Section~\ref{sec2} introduces the fundamentals of the SPH methodology. Section~\ref{sec3} presents a reformulation of the Young--Dupr\'e equation. Section~\ref{sec4} describes the governing equations and their SPH discretization. Section~\ref{sec5} provides several numerical examples and validation studies to demonstrate the accuracy and effectiveness of the proposed method. Finally, Section~\ref{sec6} summarizes the main conclusions of this work.

\section{Foundations of SPH Methodology}\label{sec2}

By the defining property of the Dirac delta function, one has
\begin{equation}\label{eq1}
  f(\mathbf{x})=\int_\Omega f(\mathbf{x}') \delta(\mathbf{x}-\mathbf{x}') \, \mathrm{d}\mathbf{x}'.
\end{equation}
Here, $\delta$ denotes the Dirac delta function, which satisfies
\[
\delta(\mathbf{x}-\mathbf{x}')=0 \quad \text{for } \mathbf{x}\neq \mathbf{x}', 
\qquad 
\int_\Omega \delta(\mathbf{x}-\mathbf{x}') \, \mathrm{d}\mathbf{x}' =1.
\]

In the SPH framework, the Dirac delta function $\delta(\mathbf{x}-\mathbf{x}')$ is replaced by a smoothing kernel function $W(\mathbf{x}-\mathbf{x}',h)$, where $h$ denotes the kernel support radius (or smoothing length). Similar to the Dirac delta function, the kernel function $W(\mathbf{x},h)$ has nonzero contribution only within a bounded support domain. Let $\|\cdot\|$ denote the Euclidean norm. In general, the kernel function is required to satisfy the following properties:
\begin{itemize}
  \item \textit{Normalization condition}:
  \(
  \int_\Omega W(\mathbf{x},h)\,\mathrm{d}\mathbf{x}=1;
  \)
  \item \textit{Symmetry property}:
  \(
  W(\mathbf{x},h)=W(-\mathbf{x},h);
  \)
  \item \textit{Compact support condition}:
  \(
  W(\mathbf{x},h)=0, \quad \text{for } \|\mathbf{x}\|\ge h.
  \)
\end{itemize}

The \textit{kernel approximation} (also referred to as the \textit{integral approximation}) of a scalar field $f(\boldsymbol{x})$ can be defined by
\begin{equation}\label{eq2}
    f_I(\boldsymbol{x}) := \int_\Omega f(\boldsymbol{x}') W(\boldsymbol{x} - \boldsymbol{x}', h) \, \mathrm{d}\boldsymbol{x}' \approx f(\boldsymbol{x}).
\end{equation}
Introducing the normalized distance
\[
q = \frac{|\boldsymbol{x} - \boldsymbol{x}'|}{h},
\]
we consider two widely used kernel functions.

\begin{enumerate}
\item \textbf{Cubic Spline Kernel} \cite{monaghan1992smoothed}:
\begin{align}
    W^{\mathrm{cs}}(q,h)  =
    \begin{cases}
    \sigma_{\mathrm{cs}}[1-6q^2(1-q)], & 0 \le q \le \frac12, \\[4pt]
    \sigma_{\mathrm{cs}}[2(1-q)^3], & \frac12 < q \le 1, \\[4pt]
    0, & q>1,
    \end{cases}
\end{align}
where $\sigma_{\mathrm{cs}} = \frac{8}{\pi h^3}$ in three dimensions.

\item \textbf{Wendland Quintic Kernel} \cite{gomez2010state}:
\begin{align}
    W^{\mathrm{wq}}(q,h) = 
    \begin{cases}
        \sigma_{\mathrm{wq}} (1-q)^4(4q+1), & 0 \leq q \leq 1, \\[4pt]
        0, & q > 1,
    \end{cases}
\end{align}
where $\sigma_{\mathrm{wq}} = \frac{21}{2\pi h^3}$ in three dimensions. The Wendland quintic kernel is often regarded as a favorable choice in terms of both computational accuracy and efficiency, since it provides relatively high-order interpolation while maintaining a computational cost comparable to that of lower-order kernels.

\end{enumerate}

Following Eq.~\eqref{eq2}, an approximation to the derivatives of a function can be derived by applying integration by parts:
\begin{equation}\label{eq3}
[D^\beta f]_I(\boldsymbol{x}) = -\int_\Omega f(\boldsymbol{x}') D^\beta_{\boldsymbol{x}'} W(\boldsymbol{x}-\boldsymbol{x}', h) \, \mathrm{d}\boldsymbol{x}',
\end{equation}
where $\beta$ is a multi-index representing the order of differentiation.

By discretizing the kernel approximation, one obtains the \textit{particle approximation} for scalar fields and for the divergence of vector fields:
\begin{align}
\langle f \rangle_i &= \sum_j \frac{m_j}{\rho_j} f_j W_{ij}^{\mathrm{wq}}, \label{PAf} \\
\langle \nabla \cdot \boldsymbol{f} \rangle_i &= -\sum_j \frac{m_j}{\rho_j} \boldsymbol{f}_j \cdot \nabla_j W_{ij}^{\mathrm{wq}}, \label{PADf}
\end{align}
where $f_j := f(\boldsymbol{x}_j)$, $W_{ij}^{\mathrm{wq}} := W^{\mathrm{wq}}(\boldsymbol{x}_i-\boldsymbol{x}_j,h)$, and $\nabla_j W_{ij}^{\mathrm{wq}} := \nabla_{\boldsymbol{x}_j}W^{\mathrm{wq}}(\boldsymbol{x}_i-\boldsymbol{x}_j,h)$. Here, $i$ denotes the target particle at which the approximation is evaluated, while $j$ denotes the neighboring particles located within the kernel support. The quantity $m_j/\rho_j$ approximates the volume associated with particle $j$.

In practical implementations, modified forms of Eq.~\eqref{PADf} are typically employed. In particular, when a derivative term is multiplied or divided by the density, it can be reformulated and incorporated into the summation operator. To this end, we make use of the following two identities \cite{monaghan1992smoothed}:
\begin{align}
[\rho\nabla\cdot\boldsymbol{f}](\boldsymbol{x}) &= \nabla\cdot(\rho \boldsymbol{f})(\boldsymbol{x}) - \boldsymbol{f}(\boldsymbol{x}) \cdot \nabla\rho(\boldsymbol{x}), \label{eq:div_identity} \\
\left[\frac{\nabla f}{\rho}\right](\boldsymbol{x}) &= \nabla\left(\frac{f}{\rho}\right)(\boldsymbol{x}) + \frac{f(\boldsymbol{x})}{\rho^2(\boldsymbol{x})}\nabla\rho(\boldsymbol{x}). \label{eq:grad_identity}
\end{align}
Combining these identities with Eq.~\eqref{PADf}, we obtain the following commonly used SPH discretizations for velocity and pressure:
\begin{align}
    \langle \rho\nabla\cdot\boldsymbol{u} \rangle_i &= -\sum_j m_j \boldsymbol{u}_{ij} \cdot \nabla_i W_{ij}^{\mathrm{wq}},  \\
    \left\langle \frac{\nabla p}{\rho} \right\rangle_i &= \sum_j m_j \left( \frac{p_i}{\rho_i^2} + \frac{p_j}{\rho_j^2} \right) \nabla_i W_{ij}^{\mathrm{wq}}.
\end{align}
Here, \(\boldsymbol{u}_{ij} := \boldsymbol{u}_{i} - \boldsymbol{u}_{j}\).
A widely used SPH discretization of the viscous term was proposed by Morris et al.~\cite{morris1997modeling}, namely,
\begin{align}
\left\langle \frac{\mu}{\rho} \nabla^2 \boldsymbol{u}\right\rangle_i
=
\sum_j \frac{ m_j (\mu_i+\mu_j)}{\rho_i\rho_j}
\frac{\boldsymbol{x}_{ij}\cdot\nabla_iW_{ij}^{\mathrm{wq}}}{|\boldsymbol{x}_{ij}|^2+(0.01 h)^2}\boldsymbol{u}_{ij}.
\end{align}

\section{Reformulation of Young--Dupr\'e Equation}\label{sec3}
The Young--Dupr\'e relation is a fundamental result in surface science that establishes a quantitative connection between the thermodynamic work of adhesion and key macroscopic wetting properties, in particular the equilibrium contact angle. It provides a theoretical criterion for characterizing the wetting behavior of a liquid on a solid surface and for determining whether partial or complete wetting occurs. However, the work of adhesion is not the most convenient quantity for direct implementation in SPH-based numerical simulations. Therefore, to facilitate numerical treatment and offer an alternative physical interpretation of contact-angle behavior, the Young--Dupr\'e relation is reformulated here in terms of intermolecular potential energy. 

\subsection{Interfacial Energy}
Consider a system of particles located at positions \(\bm r_i\). The pairwise interaction between particles \(i\) and \(j\) is described by the two-body potential \(\phi(|\bm r_i-\bm r_j|)\). The total interaction energy is
\begin{equation}
E=\sum_{i<j}\phi(|\bm r_i-\bm r_j|)
=\frac{1}{2}\sum_i\sum_{j\neq i}\phi(|\bm r_i-\bm r_j|),
\label{eq:discrete_energy}
\end{equation}
where the factor \(1/2\) removes the double counting of particle pairs.

To rewrite this expression in continuum form, we introduce the microscopic number density
\begin{equation}
n(\bm r):=\sum_i \delta(\bm r-\bm r_i).
\label{eq:number_density}
\end{equation}
Using \eqref{eq:number_density}, the double integral generates the full double sum,
\begin{equation}
\int_\Omega\int_\Omega n(\bm r)n(\bm r')\phi(|\bm r-\bm r'|)\,d\bm r\,d\bm r'
=
\sum_i\sum_j \phi(|\bm r_i-\bm r_j|),
\end{equation}
including the diagonal self-interaction terms \(i=j\), whose total contribution is
\begin{equation}
\sum_i \phi(0)=\phi(0)\int_\Omega n(\bm r)\,d\bm r.
\end{equation}
Therefore, subtracting the self-interaction terms and dividing by \(2\) to remove double counting, we obtain
\begin{equation}
E
=
\frac{1}{2}\int_\Omega\int_\Omega
n(\bm r)n(\bm r')\phi(|\bm r-\bm r'|)\,d\bm r\,d\bm r'
-\frac{\phi(0)}{2}\int_\Omega n(\bm r)\,d\bm r.
\label{eq:continuum_energy}
\end{equation}

Substituting \eqref{eq:number_density} into \eqref{eq:continuum_energy} gives
\begin{align}
E
&=
\frac{1}{2}\sum_i\sum_j
\int_\Omega\int_\Omega
\delta(\bm r-\bm r_i)\delta(\bm r'-\bm r_j)
\phi(|\bm r-\bm r'|)\,d\bm r\,d\bm r'
-\frac{1}{2}\sum_i\phi(0) \nonumber\\
&=
\frac{1}{2}\sum_i\sum_j \phi(|\bm r_i-\bm r_j|)
-\frac{1}{2}\sum_i\phi(0) \nonumber\\
&=
\frac{1}{2}\sum_i\sum_{j\neq i} \phi(|\bm r_i-\bm r_j|),
\end{align}
which coincides with \eqref{eq:discrete_energy}. Hence, \eqref{eq:discrete_energy} and \eqref{eq:continuum_energy} are equivalent.

For interfacial problems, however, the relevant energetic quantity is no longer the bulk interaction energy, but rather the interaction energy across two adjacent phases. We therefore consider the cross-interaction energy
\[
E_{\mathrm{surf}}
=
\int_{\Omega_1}\int_{\Omega_2}
n(\bm r)n(\bm r')
\phi(|\bm r-\bm r'|)\,d\bm r'\,d\bm r.
\]
For analytical convenience, we assume that the two phases are separated by a planar interface and define
\[
\Omega_1 := Q\times (0,\infty),
\qquad
\Omega_2 := \mathbb{R}^2\times (-\infty,0),
\]
where \(Q \subset \mathbb{R}^2\) denotes the interfacial region in the lateral directions. Under this assumption, the interface is flat and has area \(|Q|\). Then, the interfacial potential energy density associated with the interaction potential \(\phi(\cdot)\) is defined as
\begin{align}
\label{gammephi}
e_\phi:=\frac{E_{\mathrm{surf}}}{|Q|}
=
\frac{1}{|Q|}
\int_{Q\times(0,\infty)}
\int_{\mathbb R^2\times(-\infty,0)}
n(\bm r)n(\bm r')
\phi(|\bm r-\bm r'|)\,d\bm r'\,d\bm r.
\end{align}
In contrast to the interfacial energy density, the quantity \(e_\phi\) represents a potential-energy contribution and is therefore not necessarily positive. In particular, \(e_\phi<0\) whenever \(\phi(r)<0\) for all \(r>0\).

We now examine the relationship between the interfacial energy density and the interfacial potential function. In particular, the surface energy density is determined by the first absolute moment of the kernel. More precisely, the following theorem holds. 
\begin{theorem}[Moment representation of the surface energy density]\label{The1}
Let \(\bar{\phi}:[0,\infty)\to\mathbb R\) be measurable, and assume that all integrals below are finite. For any bounded measurable set \(Q\subset\mathbb R^2\), define
\begin{align}
\mathcal J_Q[\bar{\phi}]
:=
\frac{1}{|Q|}
\int_{Q\times(0,\infty)}
\int_{\mathbb R^2\times(-\infty,0)}
\bar{\phi}(|\bm r-\bm r'|)\,d\bm r'\,d\bm r.
\label{define_J}
\end{align}
Then
\begin{align}
\mathcal J_Q[\bar{\phi}]
=
\pi\int_0^\infty  r^3\bar{\phi}( r)\,d  r
=
\frac14\,M_1[\bar{\phi}],
\label{M1moment}
\end{align}
where
\[
M_1[\bar{\phi}]
:=
\int_{\mathbb R^3} |\hat{\bm r}|\,\bar{\phi}(|\hat{\bm r}|)\,d\hat{\bm r}
\]
is the first absolute moment of \(\bar{\phi}\). In particular, \(\mathcal J_Q[\bar{\phi}]\) is independent of \(Q\).
\end{theorem}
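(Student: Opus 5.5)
The plan is to integrate out the lateral variables first, which is where the independence of $Q$ comes from. Write $\bm r=(\bm x,z)$ with $\bm x\in Q$, $z>0$, and $\bm r'=(\bm x',z')$ with $\bm x'\in\mathbb R^2$, $z'<0$. Change variables $\hat{\bm r}=\bm r'-\bm r=(\bm y,w)$. For fixed $\bm x$, the map $\bm x'\mapsto \bm y=\bm x'-\bm x$ is a bijection of $\mathbb R^2$. Set $w=z'-z$; the constraint $z'<0$ becomes $w<-z$. By Tonelli/Fubini (justified by the finiteness assumption, applied to $|\bar\phi|$ first), the integrand then depends only on $(\bm y,w)$. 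The $\bm x$-integration over $Q$ therefore produces exactly the factor $|Q|$, which cancels the prefactor:
\begin{equation*}
\mathcal J_Q[\bar\phi]=\int_0^\infty\int_{-\infty}^{-z}\int_{\mathbb R^2}\bar\phi\bigl(\sqrt{|\bm y|^2+w^2}\bigr)\,d\bm y\,dw\,dz .
\end{equation*}
This already proves the independence of $Q$.

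Next, swap the $z$ and $w$ integrations. For fixed $w<0$, the admissible $z$ range is $0<z<-w=|w|$, which contributes a factor $|w|$. Hence
\begin{equation*}
\mathcal J_Q[\bar\phi]=\int_{w<0}\int_{\mathbb R^2}|w|\,\bar\phi(|\hat{\bm r}|)\,d\bm y\,dw=\frac12\int_{\mathbb R^3}|w|\,\bar\phi(|\hat{\bm r}|)\,d\hat{\bm r},
\end{equation*}
using the symmetry $w\mapsto -w$.

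Then pass to spherical coordinates $\hat{\bm r}=r(\sin\theta\cos\varphi,\sin\theta\sin\varphi,\cos\theta)$, so that $|w|=r|\cos\theta|$. The angular integral is
\begin{equation*}
\int_0^{2\pi}\!\!\int_0^\pi|\cos\theta|\sin\theta\,d\theta\,d\varphi=2\pi.
\end{equation*}
This gives
\begin{equation*}
\mathcal J_Q[\bar\phi]=\frac12\cdot2\pi\int_0^\infty r^3\bar\phi(r)\,dr=\pi\int_0^\infty r^3\bar\phi(r)\,dr .
\end{equation*}
For the second equality in \eqref{M1moment}, compute $M_1[\bar\phi]$ in spherical coordinates:
\begin{equation*}
M_1[\bar\phi]=4\pi\int_0^\infty r\cdot r^2\bar\phi(r)\,dr=4\pi\int_0^\infty r^3\bar\phi(r)\,dr,
\end{equation*}
so $\mathcal J_Q=\tfrac14 M_1$.

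The computations themselves are elementary. The main point needing care is the rigorous justification of the interchanges of integration order, since $\bar\phi$ may change sign. The plan is to run the whole argument first with $|\bar\phi|$, where Tonelli applies unconditionally. This shows that the absolute integral equals $\pi\int_0^\infty r^3|\bar\phi|\,dr$, which the hypothesis ``all integrals below are finite'' guarantees is finite. That establishes absolute integrability, and Fubini then licenses every step for $\bar\phi$ itself. A secondary point is that $Q$ must have positive measure for $1/|Q|$ to make sense; this is implicit in the definition.
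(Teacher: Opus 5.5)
Your proof is correct and follows essentially the same route as the paper: translate laterally so the $Q$-integral yields $|Q|$ and cancels the prefactor, integrate out the remaining normal coordinate to produce a factor equal to the normal separation, then pass to spherical coordinates. The only differences are cosmetic: you symmetrize to all of $\mathbb R^3$ with $|\cos\theta|$ instead of staying on the upper half-space with $\theta\in[0,\pi/2]$, and you add the Tonelli-then-Fubini justification for the sign-changing $\bar\phi$ and the $|Q|>0$ caveat, both of which the paper omits.
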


\begin{proof}
Write
\[
\bm r=(\bm y,z), \qquad \bm r'=(\bm y',z'),
\]
with \(\bm y,\bm y'\in\mathbb R^2\), \(z>0\), and \(z'<0\). Then \eqref{define_J} can be rewritten as 
\begin{align}
\mathcal J_Q[\bar{\phi}]
=
\frac{1}{|Q|}
\int_{Q}\int_0^\infty\int_{\mathbb R^2}\int_{-\infty}^0
\bar{\phi}\bigl(|(\bm y-\bm y',z-z')|\bigr)\,dz'\,d\bm y'\,dz\,d\bm y.
\end{align}
Since the integrand depends on \(\bm y\) and \(\bm y'\) only through their difference, let
\[
\bm\eta=\bm y-\bm y'.
\]
For each fixed \(\bm y\in Q\), this is a translation in \(\mathbb R^2\), so \(d\bm y'=d\bm\eta\). Therefore, one has
\begin{align}
\mathcal J_Q[\bar{\phi}]
&=
\frac{1}{|Q|}
\int_{Q}\int_0^\infty\int_{-\infty}^0\int_{\mathbb R^2}
\bar{\phi}\bigl(|(\bm\eta,z-z')|\bigr)\,d\bm\eta\,dz'\,dz\,d\bm y \notag\\
&=
\int_0^\infty\int_{-\infty}^0\int_{\mathbb R^2}
\bar{\phi}\bigl(|(\bm\eta,z-z')|\bigr)\,d\bm\eta\,dz'\,dz,
\end{align}
which already shows that \(\mathcal J_Q[\bar{\phi}]\) does not depend on \(Q\).

Next, introduce the change of variables
\[
t=z-z', \qquad s=z',
\]
so that \(z=t+s\), \(z'=s\), and the Jacobian is \(1\). The domain \(\{z>0,\ z'<0\}\) becomes
\(
\{t>0,\ -t<s<0\},
\)
which yields
\[
\mathcal J_Q[\bar{\phi}]
=
\int_0^\infty\int_{-t}^0\int_{\mathbb R^2}
\bar{\phi}\bigl(|(\bm\eta,t)|\bigr)\,d\bm\eta\,ds\,dt
=\int_0^\infty\int_{\mathbb R^2}
t\,\bar{\phi}\bigl(|(\bm\eta,t)|\bigr)\,d\bm\eta\,dt.
\]
Now write \(\hat{\bm r}=(\bm\eta,t)\in\mathbb R^3\), so that \(\rho_3=t\). Then, one has
\[
\mathcal J_Q[\bar{\phi}]
=
\int_{\{\hat{\bm r}\in\mathbb R^3:\rho_3>0\}}
\rho_3\,\bar{\phi}(|\hat{\bm r}|)\,d\hat{\bm r}.
\]

Using spherical coordinates in the upper half-space,
\[
\rho_3= r\cos\theta,
\qquad
d\hat{\bm r}= r^2\sin\theta\,d r\,d\theta\,d\psi,
\]
with \(\theta\in[0,\pi/2]\) and \(\bar{\phi}\in[0,2\pi]\), we obtain
\begin{align}
\mathcal J_Q[\bar{\phi}]
&=
\int_0^\infty\int_0^{2\pi}\int_0^{\pi/2}
 r\cos\theta\,\bar{\phi}( r)\, r^2\sin\theta
\,d\theta\,d\psi\,d r \notag\\
&=
\left(\int_0^{2\pi}d\psi\right)
\left(\int_0^{\pi/2}\sin\theta\cos\theta\,d\theta\right)
\int_0^\infty  r^3\bar{\phi}( r)\,d r =\pi\int_0^\infty r^3\bar{\phi}( r)\,d r.
\label{define_J_v1}
\end{align}
Note that
\begin{align}
M_1[\bar{\phi}]
=
\int_{\mathbb R^3} |\hat{\bm r}|\,\bar{\phi}(|\hat{\bm r}|)\,d\hat{\bm r}
=
4\pi\int_0^\infty  r^3\bar{\phi}( r)\,d r.
\label{define_M1}
\end{align}
Combining \eqref{define_J_v1} and \eqref{define_M1}, we obtain
\[
\mathcal J_Q[\bar{\phi}]
=
\frac14\,M_1[\bar{\phi}],
\]
which proves \eqref{M1moment}. 
This completes the proof.
\end{proof}

To obtain an explicit expression, we assume that the number density is piecewise constant in the two phases, i.e.,
\[
n(\bm r)=n_1 \quad \text{for } \bm r\in\Omega_1,
\qquad
n(\bm r)=n_2 \quad \text{for } \bm r\in\Omega_2.
\]
This assumption is reasonable for the present setting, as each phase is taken to be homogeneous away from the interface and the density variation across the interface is neglected at the level of the continuum description. Then, by Theorem \ref{The1}, the interfacial potential energy density associated with the interaction potential \(\phi\) can be expressed as
\begin{align}\label{interfatialenergy}
e_\phi
&=
\frac{1}{|Q|}
\int_{Q\times(0,\infty)}
\int_{\mathbb R^2\times(-\infty,0)}
n_1 n_2\,\phi(|\bm r-\bm r'|)\,d\bm r'\,d\bm r \notag\\
&=
n_1 n_2\,\mathcal J_Q[\phi]
=
\frac14\,M_1[n_1 n_2\phi].
\end{align}
In particular, for a planar interface, the interfacial potential energy density  admits an explicit representation in terms of the first absolute moment of the interaction kernel. Equivalently, it is completely determined by the quantity
\[
\int_{\mathbb R^3} |\hat{\bm r}|\,n_1 n_2 \phi(|\hat{\bm r}|)\,d\hat{\bm r}.
\]

\begin{remark}
While Theorem \ref{The1} assumes an infinite planar interface to derive the explicit parameter mapping, in practical SPH simulations, the localized support of
the smoothing kernel \(R_{cho}\) ensures that the planar approximation holds asymptotically away from the contact line, where the local interface curvature radius is
much larger than \(R_{cho}\).
\end{remark} 

\subsection{Contact Angle Analysis}

The equilibrium configuration at a three-phase contact line is governed by the Neumann triangle condition, which requires the vector sum of the interfacial tension forces to vanish:
\begin{equation}
    \vec{\gamma}_{12} + \vec{\gamma}_{23} + \vec{\gamma}_{13} = \vec{0},
    \label{eq:neumann_vector}
\end{equation}
where \(\vec{\gamma}_{ij}\) denotes the interfacial tension vector acting along the interface between phases \(i\) and \(j\), and \(\gamma_{ij} = |\vec{\gamma}_{ij}|\) is its Euclidean norm.

\begin{figure}[!htbp]
    \centering
    \begin{minipage}{0.45\textwidth}
        \centering
        \begin{tikzpicture}[scale=1.2]
            \coordinate (O) at (0,0);
            \coordinate (S12) at (2,0);
            \coordinate (S13) at (-1,1.732);
            \coordinate (S23) at (-1,-1.732);
            
            \draw[->, thick, blue] (O) -- (S12) node[midway, above right] {\(\vec{\gamma}_{12}\)};
            \draw[->, thick, red] (O) -- (S13) node[midway, above left] {\(\vec{\gamma}_{13}\)};
            \draw[->, thick, green!70!black] (O) -- (S23) node[midway, below left] {\(\vec{\gamma}_{23}\)};
            
            
            \fill (O) circle (2pt) node[below right] {\(O\)};
            
            \node at (1,0.8) {Phase 1};
            \node at (0.8,-1) {Phase 2};
            \node at (-1.5,0) {Phase 3};
        \end{tikzpicture}
    \end{minipage}
    \hfill
    \begin{minipage}{0.45\textwidth}
        \centering
        \begin{tikzpicture}[scale=0.8]

\def\R{3.2}
\def\thetaV{45}

\pgfmathsetmacro{\xa}{\R*cos(\thetaV)}
\pgfmathsetmacro{\yc}{-\R*sin(\thetaV)}

\coordinate (C) at (0,\yc);
\coordinate (L) at (-\xa,0);
\coordinate (R) at (\xa,0);

\fill[brown!20] (-4.5,-1.8) rectangle (2.5,0);

\draw[thick] (-4.5,0) -- (2.5,0);

\fill[blue!15] (L) arc[start angle={180-\thetaV}, end angle={\thetaV}, radius=\R] -- cycle;

\draw[thick,blue!70!black] (L) arc[start angle={180-\thetaV}, end angle={\thetaV}, radius=\R];

\node at (0,0.4) {liquid};
\node at (0,2.1) {gas};
\node at (0,-0.5) {solid};

\fill (L) circle (1.4pt);

\draw[dashed] (L) -- ++({1.8*cos(\thetaV)},{1.8*sin(\thetaV)});

\draw[->] ($(L)+(0.9,0)$) arc[start angle=0,end angle=\thetaV,radius=0.9];
\node at ($(L)+(1.0,0.38)$) {\(\theta_{\mathrm{C}}\)};

\draw[->,thick,red] (L) -- ++(-2.0,0);
\node[above] at ($(L)+(-1.0,-0.5)$) {\(\gamma_{sg}\)};

\draw[->,thick,orange!90!black] (L) -- ++(2.0,0);
\node[above] at ($(L)+(1.0,-0.5)$) {\(\gamma_{sl}\)};

\draw[->,thick,teal!70!black] (L) -- ++({2.0*cos(\thetaV)},{2.0*sin(\thetaV)});
\node[above right] at ($(L)+({2.0*cos(\thetaV)},{2.0*sin(\thetaV)})$) {\(\gamma_{lg}\)};

\end{tikzpicture}
    \end{minipage}
    \caption{Schematic illustrations of the Neumann triangle condition (left) and Young's equation (right).}
    \label{fig:two_figures}
\end{figure}

As illustrated in the left panel of Fig.~\ref{fig:two_figures}, the three interfacial tension vectors form a closed triangle, indicating local mechanical equilibrium at the contact point. The corresponding contact angles \(\theta_1\), \(\theta_2\), and \(\theta_3\) are therefore determined by the relative magnitudes of the interfacial tensions. Resolving Eq.~\eqref{eq:neumann_vector} into horizontal and vertical components gives
\begin{align}
    \gamma_{12} &= \gamma_{13}\cos(\pi-\theta_1) + \gamma_{23}\cos(\pi-\theta_2), 
    \label{eq:neumann_horiz} \\
    0 &= \gamma_{13}\sin(\pi-\theta_1) - \gamma_{23}\sin(\pi-\theta_2).
    \label{eq:neumann_vert}
\end{align}

When one of the three phases is replaced by a rigid solid, as shown in the right panel of Fig.~\ref{fig:two_figures}, the Neumann construction reduces to Young's equation,
\begin{equation}
    \gamma_{sg} = \gamma_{sl} + \gamma_{lg}\cos\theta_{\mathrm{C}},
    \label{eq:young}
\end{equation}
where \(\gamma_{sg}\), \(\gamma_{sl}\), and \(\gamma_{lg}\) are the solid--gas, solid--liquid, and liquid--gas interfacial tensions, respectively, and \(\theta_{\mathrm{C}}\) is the equilibrium contact angle.

To connect the equilibrium contact angle with interfacial energetics, we introduce the work of adhesion. Consider a virtual cleavage surface in a homogeneous liquid. The work required to separate the liquid across this surface per unit area equals twice the liquid surface tension,
\begin{align}
W_{ll} = 2\gamma_{l}.
\label{W2l}
\end{align}
For two dissimilar phases, the Dupr\'e relation \cite{good1992contact} reads
\begin{equation}
    W_{12} = \gamma_1 + \gamma_2 - \gamma_{12},
    \label{YDrelation}
\end{equation}
where \(\gamma_1\) and \(\gamma_2\) are the surface tensions of the two phases, \(\gamma_{12}\) is the interfacial tension, and \(W_{12}\) is the work of adhesion per unit area.

Neglecting adsorption at the solid--gas and liquid--gas interfaces, we take \(\gamma_s \approx \gamma_{sg}\) and \(\gamma_l \approx \gamma_{lg}\). Combining Young's equation with the Dupr\'e relation gives
\begin{align*}
    \gamma_{sg}
    &= \gamma_{sl} + \gamma_{lg}\cos\theta_{\mathrm{C}} \\
    &= (\gamma_s + \gamma_l - W_{sl}) + \gamma_{lg}\cos\theta_{\mathrm{C}} \\
    &\approx \gamma_{sg} + \gamma_{lg} - W_{sl} + \gamma_{lg}\cos\theta_{\mathrm{C}},
\end{align*}
and hence
\begin{equation}
    W_{sl} = \gamma_{lg}(1+\cos\theta_{\mathrm{C}}).
\end{equation}
This is the Young--Dupr\'e relation \cite{good1992contact,VANOSS200813}, or equivalently,
\begin{align}
    \cos\theta_{\mathrm{C}} = \frac{W_{sl}}{\gamma_{lg}} - 1.
    \label{young_2}
\end{align}

To relate the contact angle to microscopic interactions, let \(\phi_{sl}(r)\) denote the liquid--solid pair potential, and assume constant number densities \(n_l\) and \(n_s\) in the liquid and solid phases. Taking the fully separated state as the zero of energy and using the interfacial potential energy density defined in \eqref{gammephi}, we have
\begin{align}
W_{sl} = -e_{\phi_{sl}}, 
\qquad
W_{ll} = 2\gamma_{lg} = -e_{\phi_{ll}}.
\label{workandpotential}
\end{align}
Combining Eqs.~\eqref{young_2}, \eqref{workandpotential}, and \eqref{interfatialenergy}, we obtain
\begin{align}
    \cos\theta_C = \frac{2M_1[n_ln_s\phi_{sl}]}{M_1[n_ln_l\phi_{ll}]} - 1.
    \label{eq:contact_angle_micro}
\end{align}

Equation~\eqref{eq:contact_angle_micro} links the macroscopic contact angle to the microscopic liquid--solid interaction through the first moment of the interaction potential. Once the intermolecular potential and phase densities are specified, the contact angle can be estimated quantitatively.

\section{Governing Equations and SPH Discretization}\label{sec4}
Classical SPH methods can effectively capture the pressure and viscous terms in the Navier--Stokes equations. However, modeling droplet motion under different wettability conditions requires a more careful treatment of interfacial effects, particularly surface tension and liquid--solid adhesion. To this end, we introduce kernel-based pairwise potentials to characterize intermolecular interactions, so that the continuum formulation recovers the correct interfacial energy. This section presents the resulting model and its SPH discretization.

\subsection{Governing Equations}

Let \(\phi_{ll}(r)\) denote the liquid--liquid pair potential, where \(r = |\boldsymbol{x}-\boldsymbol{y}|\), and let \(n_l\) and \(n_s\) denote the number densities of the liquid and solid phases, respectively. The pairwise contribution to the liquid--liquid interaction force density at position \(\boldsymbol{x}\) due to a liquid particle located at \(\boldsymbol{y}\) is defined by
\begin{equation}
\boldsymbol{f}_{ll}^{\,p}(\boldsymbol{x},\boldsymbol{y})
=
-\nabla_{\boldsymbol{x}}
\left[
n_l^2 \phi_{ll}(|\boldsymbol{x}-\boldsymbol{y}|)
\right].
\end{equation}
Accordingly, the total liquid--liquid interaction force density acting on the fluid at \(\boldsymbol{x}\) is given by
\begin{equation}\label{f_ll}
\boldsymbol{f}_{ll}(\boldsymbol{x})
=
\int_{\Omega}
-\nabla_{\boldsymbol{x}}
\left[
n_l^2 \phi_{ll}(|\boldsymbol{x}-\boldsymbol{y}|)
\right]
\, d\boldsymbol{y},
\end{equation}
where \(\Omega\) denotes the liquid domain.

Similarly, let \(\phi_{ls}(r)\) denote the liquid--solid pair potential. Then the total adhesive force density exerted by the solid phase on the liquid at \(\boldsymbol{x}\) is expressed as
\begin{equation}\label{f_ls}
\boldsymbol{f}_{ls}(\boldsymbol{x})
=
\int_{\Omega_s}
-\nabla_{\boldsymbol{x}}
\left[
n_l n_s \phi_{ls}(|\boldsymbol{x}-\boldsymbol{y}|)
\right]
\, d\boldsymbol{y},
\end{equation}
where \(\Omega_s\) denotes the solid domain.

Although \(\phi_{ll}\) is introduced at the microscopic scale, its macroscopic effect corresponds to surface tension. Specifically, it satisfies
\begin{align}
M_1[n_l^2 \phi_{ll}] = -8\gamma_l,
\end{align}
which follows from \eqref{W2l}, \eqref{workandpotential}, and Theorem~\ref{The1}. Similarly, note that the work of adhesion at the liquid--solid interface is denoted by \(W_{sl}\), then
\begin{align}\label{lsCondition}
M_1[n_l n_s \phi_{ls}] = -4W_{sl}.
\end{align}

The droplet dynamics are governed by the Lagrangian forms of the continuity and momentum equations, together with an equation of state and the particle kinematic relation:
\begin{align}
\frac{d \rho}{d t} &= -\rho \nabla \cdot \boldsymbol{u},
&& \forall \boldsymbol{x} \in \Omega, \\
p &=
\frac{c^2 \rho_0}{k}
\left[
\left(\frac{\rho}{\rho_0}\right)^k - 1
\right],
&& \forall \boldsymbol{x} \in \Omega, \\
\rho \frac{d \boldsymbol{u}}{d t}
&=
-\nabla p
+ \mu \Delta \boldsymbol{u}
+ \rho \mathbf{g}
+ \boldsymbol{f}_{ll}(\boldsymbol{x})
+ \boldsymbol{f}_{ls}(\boldsymbol{x}),
&& \forall \boldsymbol{x} \in \Omega, \\
\frac{d \mathbf{x}}{d t} &= \boldsymbol{u},
&& \forall \boldsymbol{x} \in \Omega. 
\end{align}
Here, \(\rho\), \(p\), and \(\boldsymbol{u}\) denote the density, pressure, and velocity, respectively; \(\mu\) is the dynamic viscosity; \(\mathbf{g}\) is the gravitational acceleration; \(c\) is the artificial speed of sound; \(\rho_0\) is the reference density; and \(k\) is the exponent in the equation of state.

At the solid--liquid interface, a pressure boundary condition with a hydrostatic correction accounting for gravity is imposed, namely,
\[
\frac{\partial p}{\partial \mathbf{n}} = \rho \mathbf{g}\cdot \mathbf{n},
\qquad \forall \boldsymbol{x} \in \partial \Omega_s.
\]
This boundary condition also helps prevent fluid particles from penetrating the solid boundary.

\subsection{Nonlocal Interaction and SPH Discretization}

To facilitate numerical implementation, the microscopic interaction \(n_l^2\phi_{ll}\) is replaced by an effective kernel-based cohesion potential of the form
\[
\Phi_{ll}(\bm r; R_{\mathrm{cho}})
:= C(R_{\mathrm{cho}})\, W(\bm r; R_{\mathrm{cho}}),
\]
where \(W(\bm r; R_{\mathrm{cho}})\) is a smoothing kernel with support radius \(R_{\mathrm{cho}}\), and \(C(R_{\mathrm{cho}})\) is a coefficient to be determined. The effective potential is required to preserve the interfacial energy density, i.e.,
\[
e_{\phi_{ll}}
=
\frac{1}{4} M_1[n_1 n_2 \phi_{ll}]
=
\frac{1}{4} M_1[\Phi_{ll}(\bm r; R_{\mathrm{cho}})].
\]
This condition uniquely determines \(C(R_{\mathrm{cho}})\), yielding
\begin{equation}
\Phi_{ll}(\bm r; R_{\mathrm{cho}})
=
-\frac{8\gamma_l}{M_1[W(\cdot; R_{\mathrm{cho}})]}
\, W(\bm r; R_{\mathrm{cho}}).
\label{eq:effective_potential}
\end{equation}
Hence, \(\Phi_{ll}\) preserves the first absolute moment and thus the interfacial energy per unit area.

In the present work, the cubic spline kernel is adopted to model the nonlocal interactions. We denote
\[
W^{\mathrm{cs}}_{R_{\mathrm{cho}}}(\boldsymbol{r})
:= W^{\mathrm{cs}}(\boldsymbol{r}; R_{\mathrm{cho}}),
\]
with first absolute moment
\[
M_1[W^{\mathrm{cs}}_{R_{\mathrm{cho}}}]
:=
\int_{\mathbb R^3} |\bm r|\, W^{\mathrm{cs}}(\boldsymbol{r}; R_{\mathrm{cho}})\, d\bm r.
\]

Let \(\boldsymbol{x}\) and \(\boldsymbol{y}\) be the positions of two finite-sized particle clusters. The effective liquid--liquid interaction potential is defined as
\begin{equation}
\Phi_{ll}(|\boldsymbol{x}-\boldsymbol{y}|; R_{\mathrm{cho}})
:=
-\frac{8\gamma_l}{M_1[W^{\mathrm{cs}}_{R_{\mathrm{cho}}}]}
\, W^{\mathrm{cs}}(\boldsymbol{x}-\boldsymbol{y}; R_{\mathrm{cho}}),
\end{equation}
where \(\gamma_l\) is the liquid surface tension coefficient. The corresponding pairwise liquid--liquid interaction force density is obtained from the negative gradient of the potential:
\begin{equation}
\hat{\boldsymbol{f}}_{ll}^{\,p}(\boldsymbol{x}, \boldsymbol{y})
=
-\nabla_{\boldsymbol{x}} \Phi_{ll}(|\boldsymbol{x}-\boldsymbol{y}|; R_{\mathrm{cho}})
=
\frac{8\gamma_l}{M_1[W^{\mathrm{cs}}_{R_{\mathrm{cho}}}]}
\nabla_{\boldsymbol{x}} W^{\mathrm{cs}}(\boldsymbol{x}-\boldsymbol{y}; R_{\mathrm{cho}}).
\label{eq:f_att_pair}
\end{equation}

Integrating the pairwise interaction over the liquid domain gives the SPH kernel approximation of the total liquid--liquid interaction force density:
\begin{equation}\label{hat_f_ll}
\hat{\boldsymbol{f}}_{ll}(\boldsymbol{x})
=
\frac{8\gamma_l}{M_1[W^{\mathrm{cs}}_{R_{\mathrm{cho}}}]}
\int_{\Omega}
\nabla_{\boldsymbol{x}} W^{\mathrm{cs}}(\boldsymbol{x}-\boldsymbol{y}; R_{\mathrm{cho}})
\, d\boldsymbol{y},
\end{equation}
where \(\Omega\) denotes the liquid domain. Similarly, the liquid--solid adhesive force density is approximated by
\begin{equation}\label{hat_f_ls}
\hat{\boldsymbol{f}}_{ls}(\boldsymbol{x})
=
\frac{4 W_{sl}}{M_1[W^{\mathrm{cs}}_{R_{\mathrm{cho}}}]}
\int_{\Omega_s}
\nabla_{\boldsymbol{x}} W^{\mathrm{cs}}(\boldsymbol{x}-\boldsymbol{y}; R_{\mathrm{cho}})
\, d\boldsymbol{y},
\end{equation}
where \(\Omega_s\) denotes the solid domain and \(W_{sl}\) is the work of adhesion between the liquid and solid phases.

Applying the SPH discretization to \eqref{hat_f_ll} and \eqref{hat_f_ls} leads to the following semi-discrete formulation. Let \(N_f\) and \(N_s\) denote the numbers of fluid and solid particles, respectively. For each fluid particle \(i\in\{1,\dots,N_f\}\), we solve
\begin{align}
\frac{d \rho_i}{d t}
&=
\sum_{j=1}^{N_f+N_s}
m_j \,\boldsymbol{u}_{ij}\cdot\nabla_i W_{ij}^{\mathrm{wq}},
\qquad
p_i
=
\frac{c^2\rho_0}{k}
\left[
\left(\frac{\rho_i}{\rho_0}\right)^k - 1
\right],
\\
\frac{d \boldsymbol{u}_i}{d t}
&=
-\sum_{j=1}^{N_f+N_s}
m_j
\left(
\frac{p_i}{\rho_i^2}
+\frac{p_j}{\rho_j^2}
+\Pi_{ij}^{\mathrm{art}}
\right)
\nabla_i W_{ij}^{\mathrm{wq}}\nonumber\\
&\quad
+\frac{8\gamma_l}{M_1[W^{\mathrm{cs}}_{R_{\mathrm{cho}}}]}
\sum_{j=1}^{N_f}
\frac{m_j}{\rho_i\rho_j}
\nabla_i W_{ij}^{\mathrm{cs},R_{\mathrm{cho}}}
+\frac{4W_{sl}}{M_1[W^{\mathrm{cs}}_{R_{\mathrm{cho}}}]}
\sum_{k=1}^{N_s}
\frac{m_k}{\rho_i\rho_k}
\nabla_i W_{ik}^{\mathrm{cs},R_{\mathrm{cho}}}\nonumber\\
&\quad
+\sum_{j=1}^{N_f}
\frac{m_j(\mu_i+\mu_j)}{\rho_i\rho_j}
\frac{\mathbf{x}_{ij}\cdot\nabla_i W_{ij}^{\mathrm{wq}}}
{|\mathbf{x}_{ij}|^2+0.01h_{ij}^2}
\boldsymbol{u}_{ij}
+\mathbf{g}_i,
\\
\frac{d \mathbf{x}_i}{d t}
&=
\boldsymbol{u}_i-\epsilon\sum_{j=1}^{N_f}
m_j\frac{\boldsymbol{u}_{ij}}{\bar{\rho}_{ij}} W_{ij}^{\mathrm{wq}}.
\end{align}
Here, \(\epsilon=0.5\) \cite{monaghan1994simulating}, 
\(\boldsymbol{u}_{ij}=\boldsymbol{u}_i-\boldsymbol{u}_j\), 
\(\bar{\rho}_{ij}=(\rho_i+\rho_j)/2\), 
and \(\mathbf{x}_{ij}=\mathbf{x}_i-\mathbf{x}_j\). 
The artificial viscosity \(\Pi_{ij}^{\mathrm{art}}\) is taken in the Monaghan form \cite{monaghan2005smoothed}:
\begin{align}
\Pi_{ab}^{\mathrm{art}}
=
\begin{cases}
\dfrac{-\alpha c\,\mu_{ab}}{\bar{\rho}_{ab}}, & \boldsymbol{u}_{ab}\cdot\boldsymbol{x}_{ab}<0,\\[1ex]
0, & \boldsymbol{u}_{ab}\cdot\boldsymbol{x}_{ab}\geq 0,
\end{cases}
\qquad
\mu_{ab}
=
\frac{h\,\boldsymbol{u}_{ab}\cdot\boldsymbol{x}_{ab}}
{|\boldsymbol{x}_{ab}|^{2}+0.01h^{2}}.
\end{align}
Here, \(\alpha\) is the artificial viscosity coefficient and \(c\) is the numerical speed of sound. In the momentum equation, the second and third terms on the right-hand side correspond to the liquid--liquid cohesive force and the liquid--solid adhesive force, respectively, while the other terms represent pressure, viscous diffusion, and gravity.

At the solid--liquid interface, the pressure of a solid particle is evaluated from its neighboring fluid particles using a kernel-weighted interpolation with a hydrostatic correction \cite{adami2012generalized}:
\begin{align}
p_k
=
\frac{
\sum_i^{N_f} p_i W_{ki}^{\mathrm{wq}}
+
\mathbf{g}\cdot
\sum_i^{N_f} \rho_i \mathbf{x}_{ki} W_{ki}^{\mathrm{wq}}
}{
\sum_i^{N_f} W_{ki}^{\mathrm{wq}}
},
\end{align}
where \(k\) and \(i\) denote the solid and fluid particles, respectively, and \(\mathbf{x}_{ki}=\mathbf{x}_k-\mathbf{x}_i\). The second term in the numerator accounts for the hydrostatic pressure variation induced by gravity and improves the pressure approximation near the solid boundary.

Since the SPH equations are integrated explicitly in time, the time step is restricted by the CFL condition. In this work, it is chosen as
\begin{align}
\Delta t = 0.1\,\frac{h}{c},
\end{align}
where \(h\) is the smoothing length. The parameter \(c\) in the equation of state denotes the numerical speed of sound and is chosen to satisfy
\[
c \ge \max\!\left(1~\mathrm{m/s},\, 10\,U_{\max}\right),
\]
where \(U_{\max}\) is the characteristic maximum fluid velocity.

To establish a simple relation between the liquid--solid interaction strength and the equilibrium contact angle, we introduce the dimensionless adhesion coefficient
\begin{equation}
\alpha_{\mathrm{adh}}
:=
\frac{W_{sl}}{2\gamma_l}.
\end{equation}
Substituting this definition into Eq.~\eqref{young_2} gives
\begin{equation}
\cos\theta_C
=
2\alpha_{\mathrm{adh}} - 1.
\label{eq:simple}
\end{equation}
This relation shows that the cosine of the equilibrium contact angle depends linearly on the adhesion coefficient \(\alpha_{\mathrm{adh}}\).

\begin{remark}
In realistic fluid--fluid interactions, the intermolecular force typically comprises both a long-range attractive component and a short-range repulsive component, with the latter preventing particles from approaching each other at excessively small distances. Their combined effect gives rise to a potential profile qualitatively similar to the Lennard--Jones potential (see Fig.~\ref{fig:Lennard-Jones}). In the present model, the SPH kernel \(W(r,h)\) is used to approximate the attractive tail of the interaction potential, i.e., the portion extending from the potential minimum toward the far field (see the blue solid line). By contrast, the short-range repulsive component acts only over a much smaller length scale and is represented by the pressure term in the present formulation. Because its contribution to the first absolute moment is negligible, it is not included in the present calibration.
\end{remark}

\begin{figure}
  \centering
  \begin{tikzpicture}[scale=0.8]
\begin{axis}[
    width=11cm,
    height=7cm,
    axis lines=left,
    xlabel={$r/\sigma$},
    ylabel={$U(r)/\varepsilon$},
    xmin=1, xmax=3.0,
    ymin=-1.4, ymax=1.5,
    xtick={1,1.12246,2,3},
    xticklabels={$\sigma$,$2^{\frac{1}{6}}\sigma$,$2\sigma$,$3\sigma$},
    ytick={-1,0,1,2},
    enlargelimits=false,
    clip=true
]

\addplot[
    domain=1:1.12246,
    samples=150,
    very thick,
    blue,
    dashed
]
{4*((1/x)^12 - (1/x)^6)};

\addplot[
    domain=1.12246:3.0,
    samples=150,
    very thick,
    blue
]
{4*((1/x)^12 - (1/x)^6)};

\addplot[
    dashed,
    gray
]
{0};

\addplot[
    only marks,
    mark=*,
    mark size=2pt,
    red
]
coordinates {(1.12246,-1)};

\addplot[
    dashed,
    black
]
coordinates {(1,0) (1,-1.2)};

\addplot[
    dashed,
    red
]
coordinates {(1.12246,0) (1.12246,-1)};

\node[above right] at (axis cs:2.15,-0.1) {$U(r)\to 0$ as $r\to\infty$};

\end{axis}
\end{tikzpicture}
  \caption{Schematic of the Lennard--Jones potential. In the present model, the attractive tail from the potential minimum to the far field is approximated by the effective kernel-based cohesion potential.}
  \label{fig:Lennard-Jones}
\end{figure}

\section{Numerical Verification}\label{sec5}
In this section, several representative numerical examples are presented to validate the theoretical results and demonstrate the scalability of the proposed model. All numerical simulations were performed using the PySPH library \cite{ramachandran2021a}. The validation videos and the corresponding source code are publicly available at \href{https://doi.org/10.5281/zenodo.21404140}{\color{blue}DOI: 10.5281/zenodo.21404140}.

\subsection{Validation of the Contact-Angle Relation and Equilibrium Droplet Shapes}

To validate the theoretical relation between the adhesion coefficient and the equilibrium contact angle, we perform a series of static droplet simulations under zero gravity. The parameters are set to \(R_{\mathrm{cho}} = 0.20\,\mathrm{mm}, h = 0.11\,\mathrm{mm}\) and an average particle spacing of \(\Delta x = 0.05\,\mathrm{mm}\). For each prescribed value of \(\alpha_{\mathrm{adh}}\), the equilibrium droplet profile is measured and the corresponding contact angle is extracted.

Figure~\ref{fig:fitting_comparison} compares the simulation results, expressed in terms of \(\cos(\theta_C)\), with the theoretical prediction \eqref{eq:simple}. 
Overall, the numerical results follow the theoretical linear relation closely over the range of \(\alpha_{\mathrm{adh}}\) considered. A least-squares fit to the numerical data yields
\[
\cos(\theta_C) = 2.081\,\alpha_{\mathrm{adh}} - 1.104,
\]
which is in good agreement with the theoretical slope and intercept in \eqref{eq:simple}. 
To further quantify the agreement, the coefficient of determination \(R^2\), the adjusted \(R^2\), the root-mean-square error (RMSE), and the mean absolute error (MAE) are evaluated. The results are summarized in Table~\ref{tab:fitting_quality}. The fact that the \(R^2\) values are close to 1, together with the small error metrics, indicates that the numerical simulations agree closely with the theoretical relation for the contact angle.

\begin{figure}[htbp]
\centering
\begin{tikzpicture}
\begin{axis}[
    width=0.72\textwidth,
    height=0.52\textwidth,
    xlabel={\(\alpha_{\mathrm{adh}}\)},
    ylabel={\(\cos(\theta_C)\)},
    xmin=0, xmax=1,
    ymin=-1, ymax=1,
    grid=both,
    thick,
    legend style={
        at={(0.03,0.97)},
        anchor=north west,
        draw=none,
        fill=none
    },
    tick label style={font=\normalsize},
    label style={font=\normalsize},
]

\addplot[
    blue,
    line width=1.2pt,
    domain=0.066:0.933,
    samples=200
]
{2*x - 1};
\addlegendentry{Theoretical equation}

\addplot[
    only marks,
    mark=*,
    mark size=2.4pt,
    black
]
coordinates {
    (0.066,-0.8746)
    (0.250,-0.6018) 
    (0.400,-0.3420)
    (0.500,-0.1045)
    (0.600, 0.0872)
    (0.750,0.5000)
    (0.933,0.8910) 
};
\addlegendentry{Numerical simulation}

\addplot[
    red,
    dashed,
    line width=1.2pt,
    domain=0.066:0.933,
    samples=200
]
{2.080655*x  -1.103576};
\addlegendentry{Least-squares fitted line}

\end{axis}
\end{tikzpicture}
\caption{Comparison between the theoretical contact-angle relation, the least-squares fitted line, and the transformed numerical data.}
\label{fig:fitting_comparison}
\end{figure}
\begin{table}[htbp]
\centering
\caption{Comparison of fitting quality between the least-squares fitted line and the theoretical equation.}
\label{tab:fitting_quality}
\begin{tabular}{lcc}
\toprule
Metric & Least-squares fitted line & Theoretical equation \\
\midrule
Equation & \(\cos(\theta_C) = 2.081\,\alpha_{\mathrm{adh}} -1.104\) & \(\cos(\theta_C) = 2\,\alpha_{\mathrm{adh}} - 1\) \\
\(R^2\) & 0.989701 & 0.975918 \\
Adjusted \(R^2\) & 0.987641 & 0.971102 \\
RMSE & \(5.789650\times10^{-2}\) & \(8.853183\times10^{-2}\) \\
MAE & \(5.372844\times10^{-2}\) & \(7.040487\times10^{-2}\) \\
\bottomrule
\end{tabular}
\end{table}

To gain a more intuitive understanding of the simulated contact angles, we further investigate the equilibrium shapes of droplets on a solid substrate. Figure~\ref{fig:contact_angles_simple} shows three representative equilibrium droplet configurations corresponding to different wetting regimes. 
When \(W_{sl} = (1+\sqrt{3}/2)\gamma_l\), the equilibrium contact angle is \(\theta_C=\pi/6\), representing a strongly hydrophilic surface; see Fig.~\ref{fig:contact_angles_simple}(a). When \(W_{sl}=\gamma_l\), one obtains \(\theta_C=\pi/2\), corresponding to a neutral wetting state, as shown in Fig.~\ref{fig:contact_angles_simple}(b). When \(W_{sl}=(1-\sqrt{3}/2)\gamma_l\), the equilibrium contact angle becomes \(\theta_C=5\pi/6\), indicating a hydrophobic surface; see Fig.~\ref{fig:contact_angles_simple}(c). These results demonstrate that the proposed model can reproduce a broad range of prescribed equilibrium contact angles by tuning the liquid--solid work of adhesion.

\begin{figure}[!htbp]
    \centering
    \includegraphics[width=0.31\textwidth]{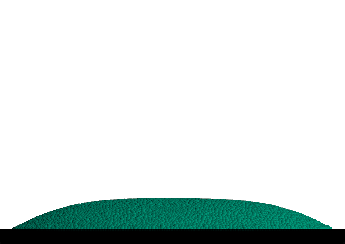}
    \hfill
    \includegraphics[width=0.31\textwidth]{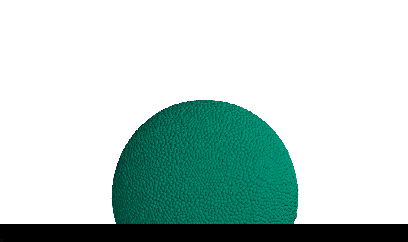}
    \hfill
    \includegraphics[width=0.31\textwidth]{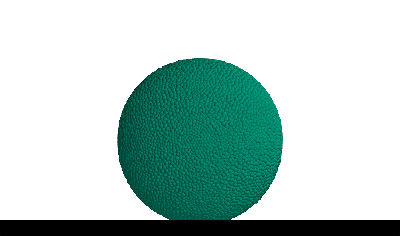}

    \medskip

    \parbox{0.3\textwidth}{\centering (a) \(W_{sl}=(1+\sqrt{3}/2)\gamma_l\)}
    \hfill
    \parbox{0.3\textwidth}{\centering (b) \(W_{sl}=\gamma_l\)}
    \hfill
    \parbox{0.3\textwidth}{\centering (c) \(W_{sl}=(1-\sqrt{3}/2)\gamma_l\)}

    \caption{Equilibrium droplet shapes for different values of the liquid--solid work of adhesion \(W_{sl}\). The corresponding contact angles are (a) \(\theta_C=\pi/6\), (b) \(\theta_C=\pi/2\), and (c) \(\theta_C=5\pi/6\).}
    \label{fig:contact_angles_simple}
\end{figure}

The capability of the present model to capture equilibrium morphologies governed by inter-liquid adhesion is further demonstrated in Fig.~\ref{fig:time_dependent_simple}. Two droplets of identical size and equal surface tension, \(\gamma_{l_1}=\gamma_{l_2}\), are considered under different values of the adhesion work \(W_{l_1l_2}\) between the two liquid phases. For this symmetric configuration, Eqs.~\eqref{eq:neumann_horiz}--\eqref{eq:neumann_vert} simplify to
\begin{align}
    \cos\theta_{\mathrm{eq}}=\frac{W_{l_1l_2}}{2\gamma_{l_1}}-1,
\end{align}
where \(\theta_{\mathrm{eq}}\) denotes the equilibrium contact angle.

\begin{figure}[!htbp]
    \centering
    \includegraphics[width=0.32\textwidth]{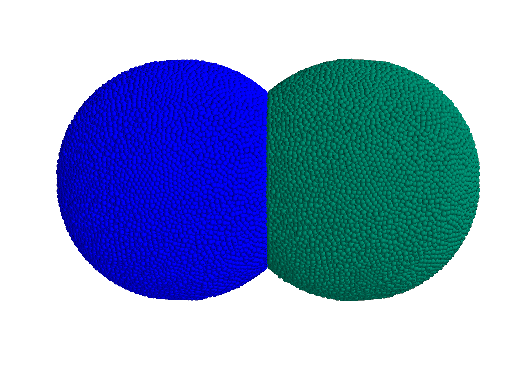}
    \includegraphics[width=0.32\textwidth]{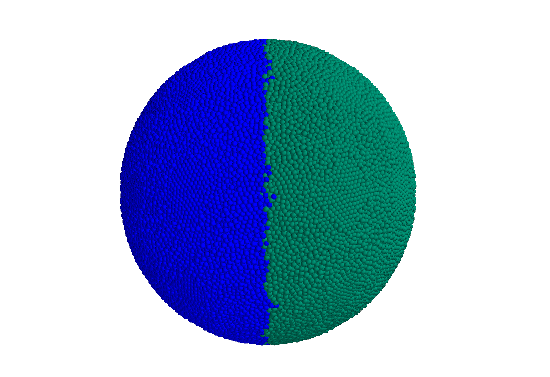}

    \medskip

    \parbox{0.3\textwidth}{\centering (a) \(W_{l_1l_2}=\gamma_{l_1}\)} 
    \parbox{0.3\textwidth}{\centering (b) \(W_{l_1l_2}=2\gamma_{l_1}\)}

    \caption{Equilibrium shapes of two droplets of the same size under different liquid--liquid adhesion energies. The corresponding contact angles are (a) \(\theta_C=2\pi/3\) and (b) \(\theta_C=\pi/2\).}
    \label{fig:time_dependent_simple}
\end{figure}
As shown in Fig.~\ref{fig:time_dependent_simple}, when \(W_{l_1l_2}=\gamma_{l_1}\), the equilibrium contact angle is \(\theta_{\mathrm{eq}}=2\pi/3\), and the two droplets appear as adjoining caps larger than hemispheres. When the adhesion work is increased to \(W_{l_1l_2}=2\gamma_{l_1}\), the equilibrium contact angle decreases to \(\theta_{\mathrm{eq}}=\pi/2\), and the droplets each take on a hemispherical shape, merging into a smooth and rounded overall configuration. This comparison indicates that stronger adhesion between the two liquid phases enhances their mutual affinity and reduces the equilibrium contact angle. The good agreement between the theoretical relation and the simulated droplet profiles confirms that the present model accurately captures the role of inter-liquid adhesion in determining the equilibrium morphology.

\subsection{Nonsteady Confirmation of Tanner's Law at the Complete Wetting Limit}
To verify Tanner's Law numerically, we simulated the spontaneous spreading of a droplet on a perfectly wetting solid surface. 
For this extreme state of adhesion where $\alpha_\text{adh} =1$, the strong solid-liquid attraction yields a hydrophilic surface with a contact angle of $\theta = 0$. 
Figure~\ref{fig:tanner_verification} presents the temporal evolution of the spreading radius $R$ on a log-log scale, with time $t$ measured in milliseconds and diameter in millimeters. 
According to Tanner's law, in the capillary regime where viscous forces dominate inertia, the spreading radius should follow the power-law relationship \(R(t)\propto t^{1/10}\) \cite{tanner1979spreading,wang2024role}. To test this prediction, the numerical data were fitted using the function
\begin{equation}
    R(t) = 2 \times (t - t_{\text{min}})^{0.1},
    \label{eq:tanner_fit}
\end{equation}
where \(t_{\text{min}}\) is a time-offset parameter introduced to account for the finite onset time of the Tanner-regime spreading.

\begin{figure}[!htbp]
    \centering
    \includegraphics[width=0.7\textwidth]{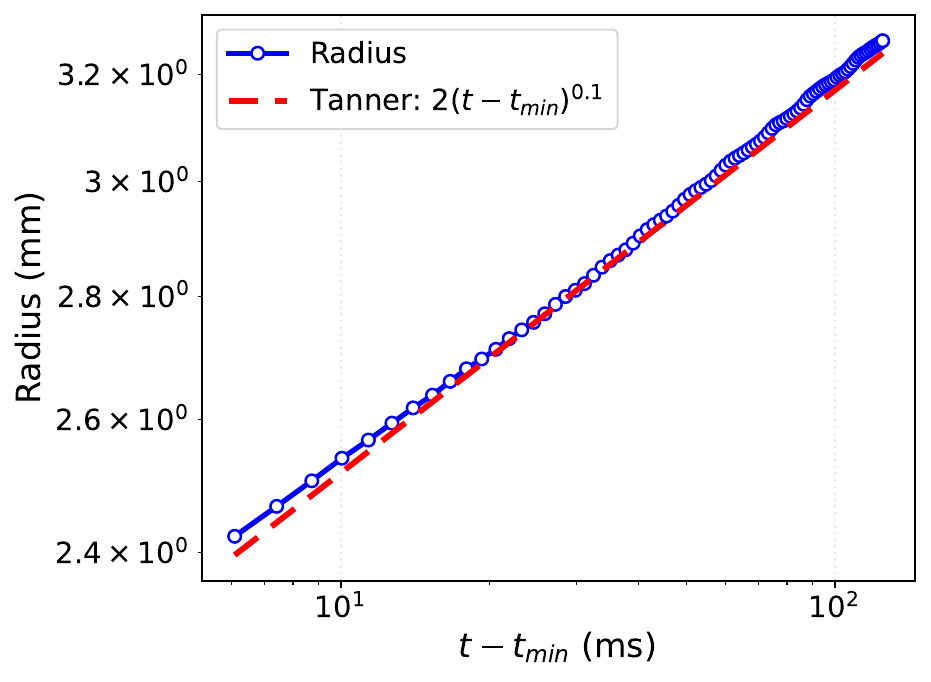}
    \caption{Log-log plot of droplet spreading radius as a function of time with $t_{min}=15$ms.}
    \label{fig:tanner_verification}
\end{figure}
As shown in Fig.~\ref{fig:tanner_verification}, the numerical results (represented by solid symbols) remain close to the fitted curve (dashed line) over nearly two decades in time, from approximately \(10\) ms to \(1000\) ms. The fitted exponent of \(0.1\) agrees well with the theoretical exponent predicted by Tanner's law. These results suggest that the numerical model captures the main features of contact-line dynamics and is consistent with Tanner's law in the droplet-spreading regime.

\subsection{Numerical Verification of Rolling Droplet Rebound}

To further validate the capability of the proposed SPH model in predicting complex droplet--surface interactions, we simulate the recently reported rolling droplet rebound on a patterned wettability (PW) surface \cite{zhao2025limit}. In this newly identified rebound mode, the droplet rolls rapidly along the substrate during recoil, leading to an apparent rebound angle approaching zero, which represents the theoretical lower limit of droplet rebound angles. Owing to the coexistence of superhydrophilic and superhydrophobic regions, this phenomenon involves strong wettability contrast, contact-line pinning, asymmetric capillary retraction, and directional momentum redistribution, thereby providing a stringent test for the present model.

The PW surface consists of a superhydrophilic (SHL) arc integrated onto a superhydrophobic (SHB) substrate. The SHL region exhibits an equilibrium contact angle below \(3^\circ\), whereas the SHB background has a contact angle of approximately \(158^\circ\). As shown in Fig.~\ref{PW}, the SHL arc has a line width of \(L = 200\,\mu\mathrm{m}\), an opening angle of \(\alpha = \pi/5\), and a radius of \(R = 2.5\,\mathrm{mm}\). This radius is chosen to be close to the maximum spreading radius of the impacting droplet, such that the receding liquid rim remains pinned by the SHL arc during recoil and is therefore subjected to enhanced adhesive resistance.

\begin{figure}[htbp]
  \centering
  \includegraphics[width=0.5\textwidth]{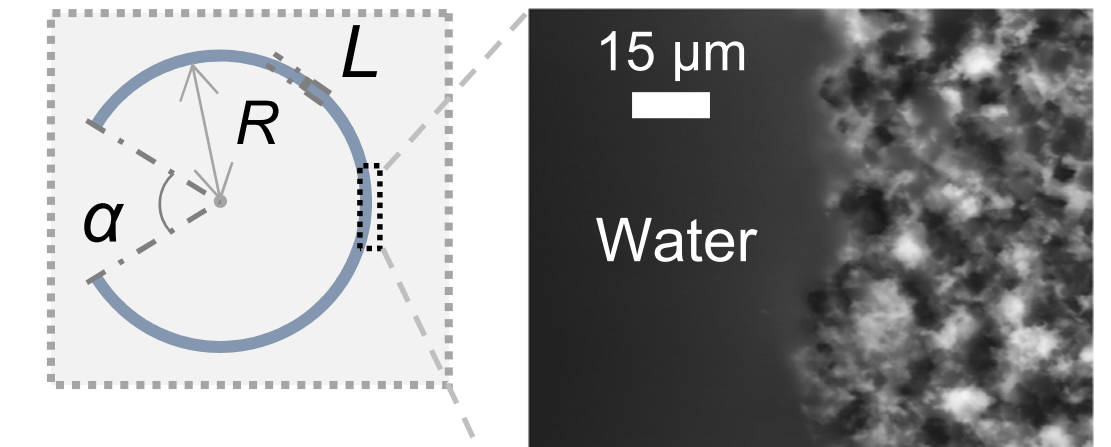}
  \caption{Schematic and microscopic characterization of PW surface. The surface consists of a SHL arc embedded in a SHB background.}
  \label{PW}
\end{figure}

The droplet impact condition is characterized by the Weber number,
\begin{equation}
    We = \frac{\rho v_0^2 R_0}{\gamma},
\end{equation}
where \(\rho\), \(v_0\), \(R_0\), and \(\gamma\) denote the liquid density, impact velocity, initial droplet radius, and surface tension, respectively. In the present case, a water droplet with \(We = 32.8\) impacts the PW surface. After impact, the droplet spreads rapidly and reaches its maximum lateral extension at approximately \(t = 2.2\,\mathrm{ms}\), after which the liquid film begins to retract and peel off from the substrate.

\begin{figure}[!htbp]
  \centering
  \includegraphics[width=0.78\textwidth]{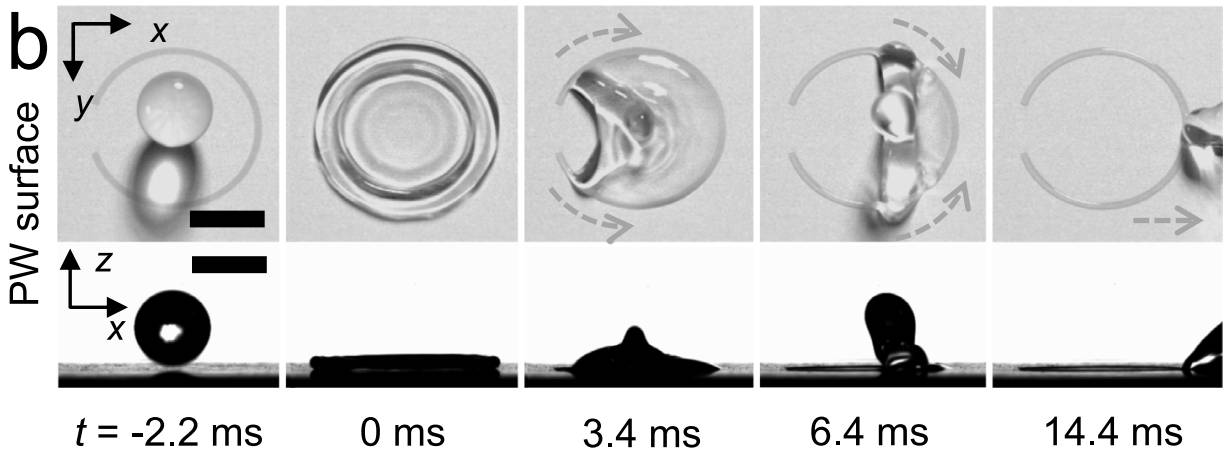}

  \vspace{0.6em}

  \includegraphics[width=0.15\textwidth]{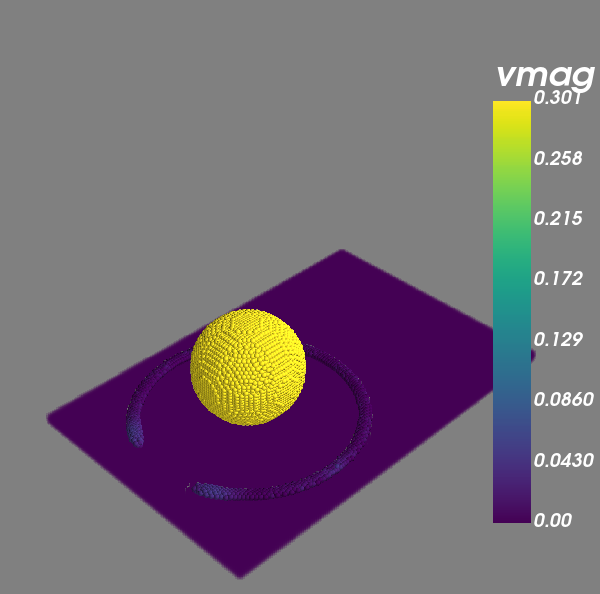}
  \includegraphics[width=0.15\textwidth]{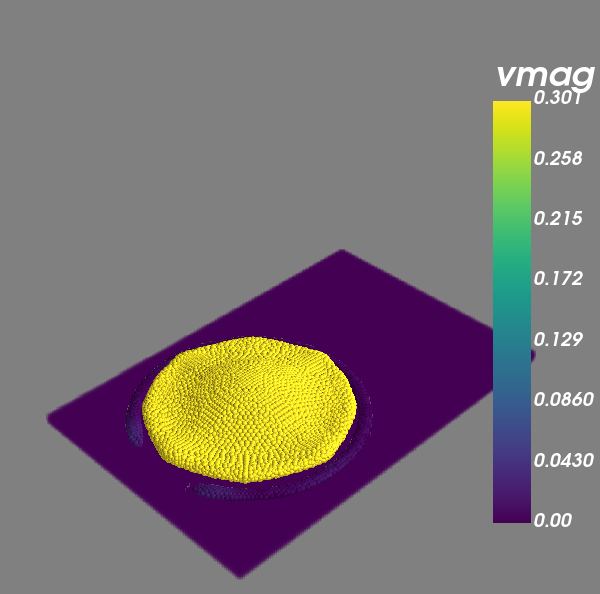}
  \includegraphics[width=0.15\textwidth]{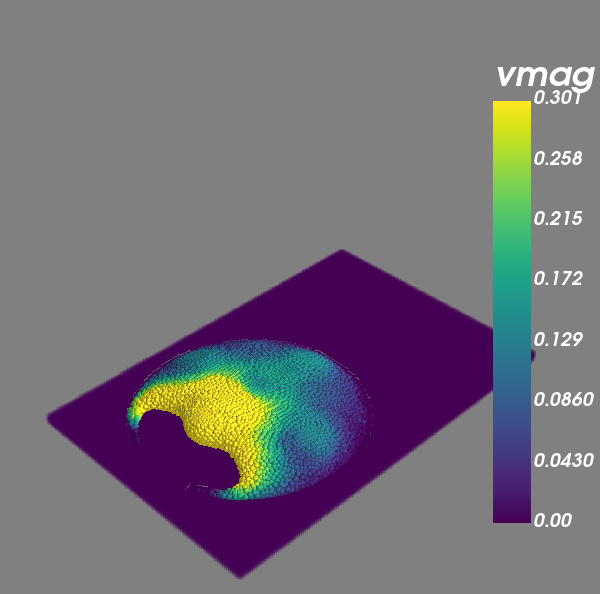}
  \includegraphics[width=0.15\textwidth]{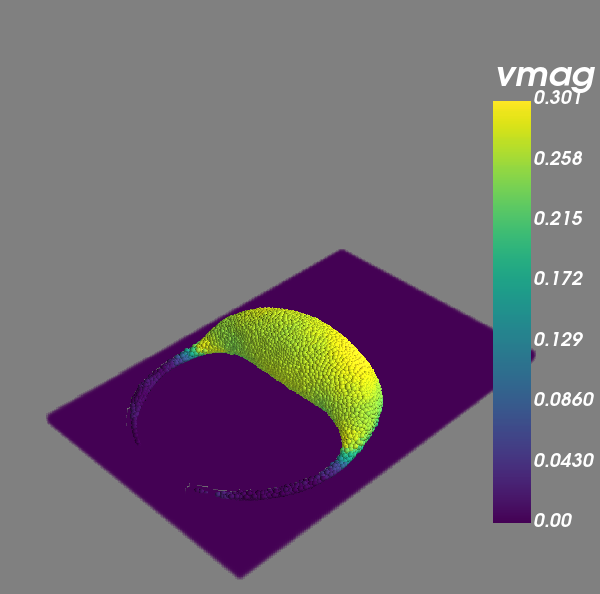}
  \includegraphics[width=0.15\textwidth]{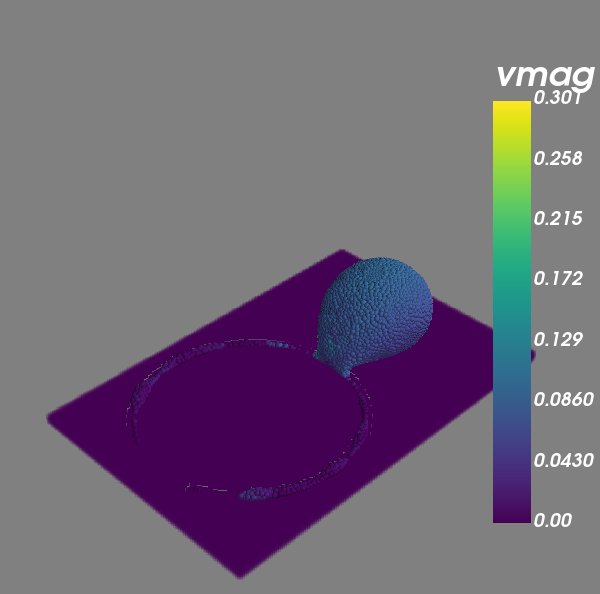}

  \vspace{0.4em}

  \includegraphics[width=0.15\textwidth]{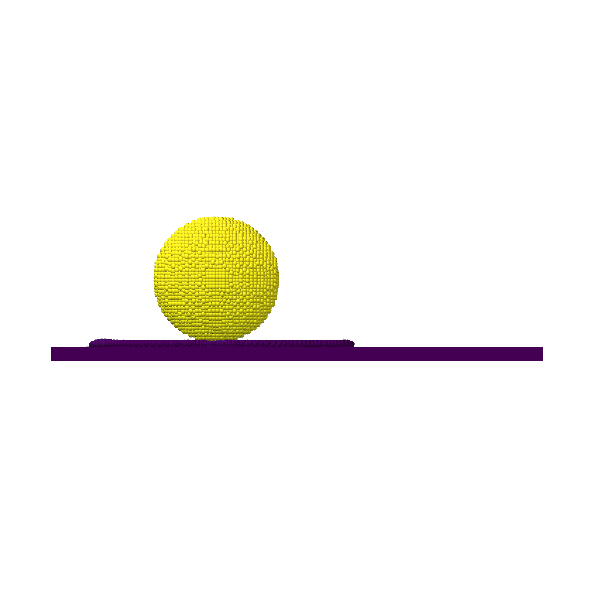}
  \includegraphics[width=0.15\textwidth]{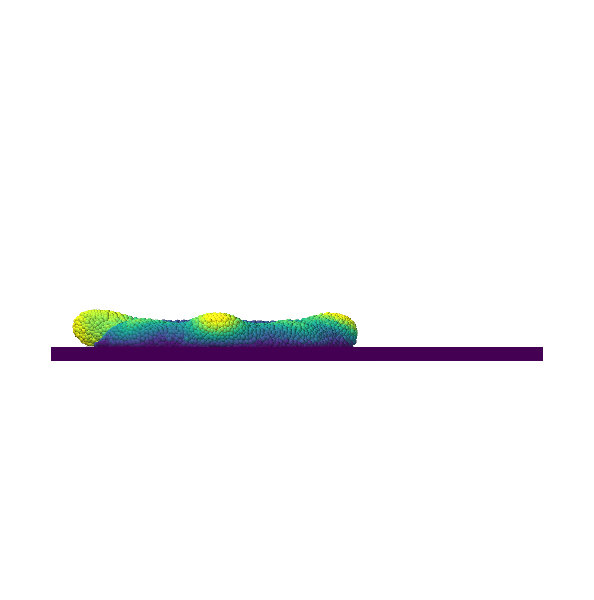}
  \includegraphics[width=0.15\textwidth]{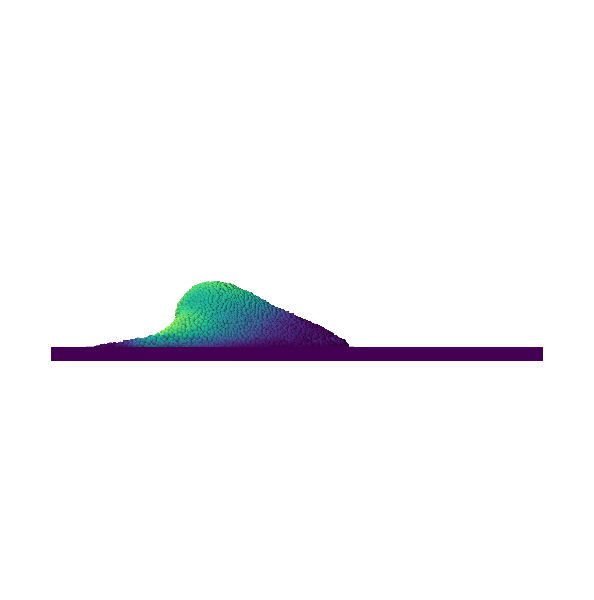}
  \includegraphics[width=0.15\textwidth]{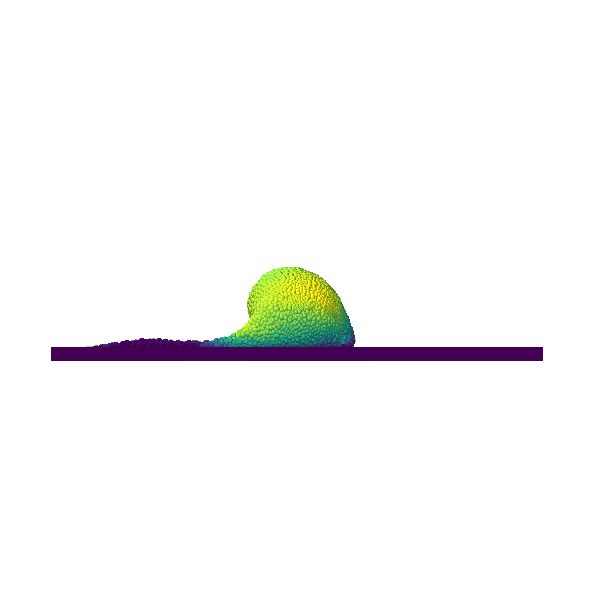}
  \includegraphics[width=0.15\textwidth]{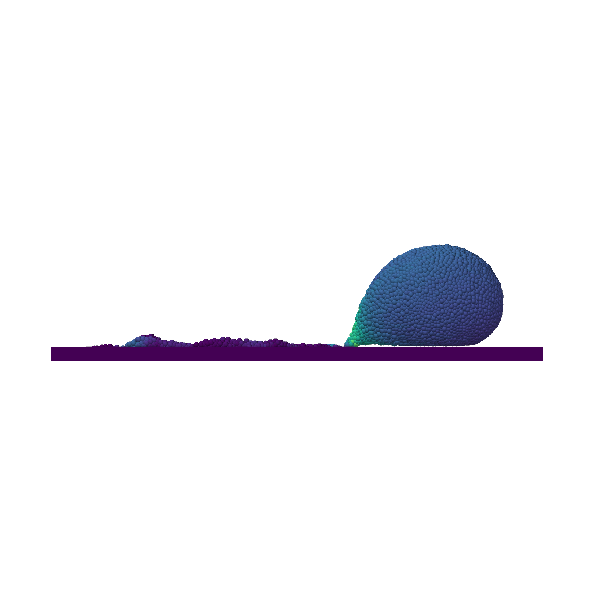}

  \caption{Comparison between experimental observation and numerical simulation of droplet impact and rolling rebound on a patterned wettability surface.}
  \label{fig:exp_num_comparison}
\end{figure} 

Figure~\ref{fig:exp_num_comparison} compares the experimentally observed impact sequence with the corresponding numerical results. The first and second rows display the experimental snapshots, while the third and fourth rows show the simulated droplet evolution at the same representative instants. The second and fourth rows provide the corresponding side views along the \(y\)-direction. The numerical model successfully captures the main stages of the process, including rapid spreading, contact-line pinning along the SHL arc, asymmetric retraction, and the subsequent rolling rebound. In particular, the predicted droplet shapes, interfacial deformations, and overall rebound dynamics agree closely with the experimental observations.

These results demonstrate that the proposed numerical approach can accurately capture the coupled effects of surface-tension-driven flow, wettability heterogeneity, and dynamic contact-line motion on patterned substrates. The good agreement between experiment and simulation further confirms the robustness of the present method for resolving highly transient droplet impact and rebound phenomena on chemically heterogeneous surfaces.

\subsection{Numerical Verification of Coalescence-Induced Droplet Jumping}

To validate the capability of the present model in reproducing transient capillary-driven dynamics, we simulate the coalescence of two unequal droplets on a flat substrate \cite{boreyko2009self}. In this case, a mobile droplet with a diameter of \(270\,\mu\mathrm{m}\) approaches a stationary droplet with a diameter of \(200\,\mu\mathrm{m}\). Upon contact, the two droplets coalesce into a single larger droplet. Although the coalescence is initiated primarily along the substrate plane, the merged droplet subsequently accelerates in the out-of-plane direction and lifts off from the surface with a jumping velocity of approximately \(0.14\,\mathrm{m/s}\).

\begin{figure}[!htbp]
  \centering

  \includegraphics[width=0.7\textwidth]{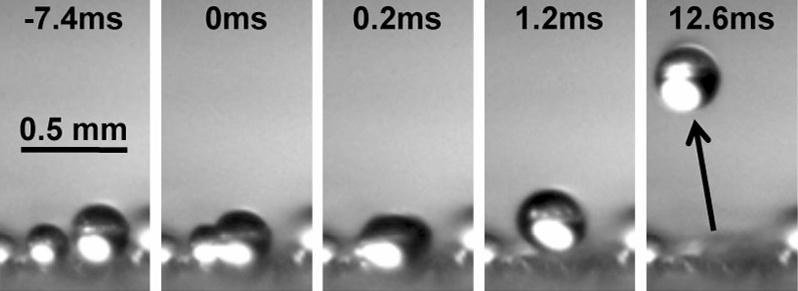} 

  \begin{minipage}{0.78\textwidth}
    \centering
    \includegraphics[width=0.155\textwidth]{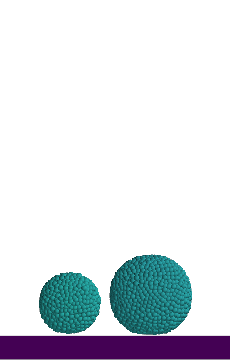}
    \includegraphics[width=0.155\textwidth]{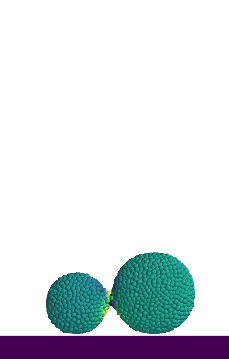}
    \includegraphics[width=0.155\textwidth]{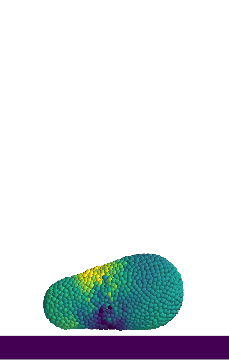}
    \includegraphics[width=0.155\textwidth]{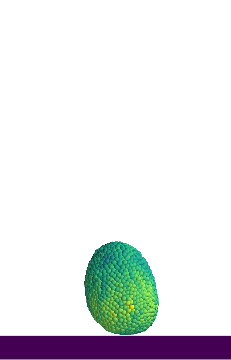}
    \includegraphics[width=0.155\textwidth]{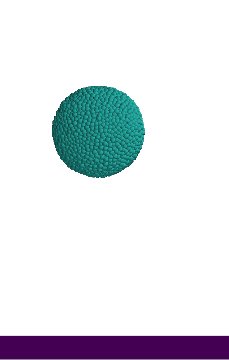}
    \includegraphics[width=0.09\textwidth]{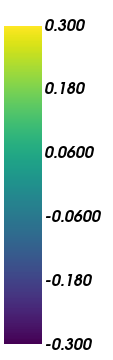}
  \end{minipage}

  \caption{Simulation of coalescence-induced jumping of two droplets on a flat substrate.}
  \label{fig:coalescence_jump}
\end{figure}

Figure~\ref{fig:coalescence_jump} presents the simulated coalescence sequence, where the color contours denote the vertical velocity component \(v_z\). The simulation captures the rapid formation and expansion of the liquid bridge immediately after contact, followed by the retraction of the merged droplet into a more compact shape. During this process, the total interfacial area decreases, leading to the release of excess surface energy. This released energy is partially converted into kinetic energy, which generates a net upward momentum and ultimately causes the droplet to detach from the substrate.

The predicted evolution reproduces the key physical features of coalescence induced jumping, including bridge growth, capillary retraction, and vertical ejection. The distribution of \(v_z\) further confirms that the dominant post-coalescence motion is oriented normal to the surface. These results demonstrate that the present method is capable of accurately resolving fast interfacial deformation and energy conversion during droplet coalescence, thereby providing reliable predictions for dynamic wetting phenomena on solid surfaces.

\section{Conclusion}\label{sec6} 

In this work, a physically grounded SPH model has been developed for the simulation of dynamic droplet behaviors. The proposed method establishes an explicit relationship between the intermolecular potential energy and the surface tension coefficient. This relation enables the microscopic-force modeling of interfacial tension at the liquid--gas interface and adhesion at the liquid--solid interface. In addition, the pressure of solid particles at the solid--liquid interface is evaluated from neighboring fluid particles using a kernel-weighted interpolation with a hydrostatic correction. Together with a single-phase modeling strategy, these features provide an efficient and physically consistent approach for droplet dynamics simulation.

The proposed framework has been assessed through representative droplet wetting and impact problems. The results show that it can accurately reproduce the principal features of droplet dynamics, including spreading, recoiling, and interfacial deformation, with satisfactory agreement with reference solutions. The simulations also demonstrate the good robustness and efficiency of the method in handling interfacial interactions and solid-boundary effects.

Overall, the present work provides an accurate, robust, and efficient SPH approach for dynamic droplet simulations on complex surfaces. It also offers a useful basis for future studies of more complex wetting, impact, and multiphase interfacial flow problems. 

\bibliographystyle{siamplain}
\bibliography{references} 
\end{document}